\documentclass[a4paper]{article}

\usepackage[T1]{fontenc}
\usepackage{hyperref}
\usepackage{amsmath}
\usepackage{amssymb}
\usepackage{amsthm}
\usepackage{stmaryrd}
\usepackage{mathtools}

\newcommand{\M}[1]{\ensuremath{\mathcal{#1}}}
\newcommand{\mucalc}{\ensuremath{L_{\mu}}}
\newcommand{\nxt}{\mathop{\bigcirc}}
\newcommand{\fp}{\mathrm{fp}}
\newcommand{\sem}[3]{\llbracket #1 \rrbracket_{#3}^{#2}}
\newcommand{\factor}[1]{\ensuremath{F(#1)}}
\newcommand{\ifactor}[1]{\ensuremath{F_\infty(#1)}}
\newcommand{\recurrence}[2]{\ensuremath{R_{#2}(#1)}}

\newcommand{\ajoin}[4]{\ensuremath{\mathrm{join}(#1, #2, #3, #4)}}
\newcommand{\aremove}[1]{\ensuremath{\mathrm{remove}(#1)}}
\newcommand{\aout}[1]{\ensuremath{\mathrm{out}(#1)}}
\newcommand{\laout}[1]{\ensuremath{L_\mathrm{out}(#1)}}

\newcommand{\cl}[2]{\ensuremath{\mathrm{Cl}(#1, #2)}}
\newcommand{\myfalse}{\texttt{ff}}
\newcommand{\mytrue}{\texttt{tt}}

\def\newarrow#1{\mathop{{\hbox{\setbox0=\hbox{$\scriptstyle{#1\quad}$}{$%
\mathrel{\mathop{\setbox1=\hbox to
\wd0{\rightarrowfill}\ht1=3pt\dp1=-2pt\box1}\limits^{#1}}%
$}}}}}
\newcommand{\Transition}[3]{\ensuremath{#1 \newarrow{#2} #3}}

\theoremstyle{plain}
\newtheorem{theorem}{Theorem}
\newtheorem{lemma}[theorem]{Lemma}
\newtheorem{proposition}[theorem]{Proposition}

\theoremstyle{definition}
\newtheorem{definition}[theorem]{Definition}
\newtheorem{remark}[theorem]{Remark}

\allowdisplaybreaks

\begin{document}
\title{Finite Convergence of the Modal $\mu$-Calculus on Almost-Periodic Words}

\author{Fabian Lehr \\ TU Munich, Germany \and Florian Bruse \\ TU Munich, Germany}
\maketitle             
\begin{abstract}
A formula of the modal $\mu$-calculus enjoys finite convergence on a structure if there is
some finite unfolding of the formula that defines the same set. A structure enjoys finite
convergence if all formulas of the $\mu$-calculus enjoy finite convergence on said structure. It is
known that there are words that are not ultimately periodic, but have finite convergence.

An almost-periodic word $w$ is one in which each finite word $v$ either appears only finitely often,
or within each factor of some length that only depends only on $w$ and $v$. It is immediate
that words that have finite convergence must be almost periodic. In this paper we show the converse,
namely that all almost-periodic words have finite convergence.
This characterizes finite convergence on infinite words, and also re-proves a decidability result due to Semenov ('84).

\end{abstract}
 
\section{Introduction}
\label{sec:intro}
The Modal $\mu$-calclus ($\mucalc$) \cite{DBLP:conf/focs/Pratt81,DBLP:journals/tcs/Kozen83}
is a central logic for model-checking and verification, in particular due to its role
as the yardstick for regular temporal logics such as LTL and CTL. $\mucalc$ extends
basic modal logic by the ability to formulate recursive definitions via least and greatest
fixpoints. 

Since $\mucalc$ serves as the prototypical temporal logic, the exact nature
of its recursive definitions has been studied intensively. For example, it is
known that nesting more fixpoint definitions of opposite polarity (e.g., least fixpoints that
are mutually recursive with greatest fixpoints, etc.) gives strictly more expressive power \cite{DBLP:conf/concur/Bradfield96}. Hence, the so-called \emph{alternation hierarchy} of the $\mu$-calculus is strict.
This is not true if one restricts oneself to words or word-like structures \cite{DBLP:conf/concur/Kaivola95,DBLP:journals/tcs/GutierrezKL14},
where the $\mu$-calculus collapses into its alternation-free fragment.
Finally, over structures without infinite paths, least fixpoints can be rewritten into greatest fixpoints and vice versa \cite{DBLP:conf/tacas/Mateescu02}.

Another central question is how fast the fixpoint of such a recursive definition is reached. It follows from the Kleene
Fixpoint Theorem \cite{Kleene:1938} that both least and greatest fixpoints can be replaced by an ordinal-indexed sequence of 
approximations, such that, over any given structure, some approximation is equivalent to the actual fixpoint definition.
Hence, the fixpoint definition \emph{converges} at this ordinal.
This ordinal is then called the structure's closure ordinal of the fixpoint definition, respectively that of the formula that expresses it. 
The exact nature of the closure ordinal has consequences for e.g., model checking \cite{DBLP:reference/mc/BradfieldW18}.

Given a formula, the least ordinal that bounds the closure ordinal of the formula on \emph{all} structures is simply called
the closure ordinal of the formula, if it exists. For example, the formula $\mu X.\ \Box x$, which expresses that all
paths in the structure are finite, has no closure ordinal.
 There has been a long chain of research establishing the exact nature of closure ordinals
of formulas. For example, \cite{DBLP:conf/fics/Czarnecki10} establishes that for every ordinal less that $\omega^2$, 
there is a formula with this closure ordinal. \cite{DBLP:journals/lmcs/GouveiaS19} gives formulas with uncountable closure ordinals, and \cite{DBLP:conf/csl/AfshariL13,DBLP:journals/corr/abs-2511-02594}
show for the so-called $\Sigma$-fragment that closure ordinals are either below $\omega^2$ or at least $\omega_1$.
Moreover, \cite{DBLP:conf/stacs/NiwinskiPS25} presents a similar dichotomy for monadic second-order logic on trees.
See \cite{DBLP:journals/corr/abs-2511-02594} for a good overview over the literature.

A related line of research concerns the closure behavior of \emph{structures}, i.e., the closure ordinals of formulas on a fixed structure. On finite structures, all fixpoint definitions uniformly
converge at the cardinality of the structure, and this extends to structures with finite bisimulation quotient,
where convergence happens at the size of the quotient. A natural question is then whether finite convergence 
also requires a finite bisimulation quotient. This is of interest because e.g.,
on structures on which all fixpoint definitions are equivalent to \emph{some} finite approximation, this approximation
process can be internalized into a stronger logic, yielding a collapse of the alternation hierarchy of $\mucalc$ into
the alternation-free fragment of this logic \cite{BruseLL18}.

 This question was resolved
in \cite{DBLP:conf/mfcs/BruseSL21} by exhibiting a word with infinite bisimulation quotient, on which nevertheless
all formulas of the $\mu$-calculus converge after a finite number of approximations. This number depends on the formula,
whence the closure ordinal of this word is still $\omega$, but it is relatively easy to see that that only structures
with finite bisimulation quotient can have finite closure ordinals (cf.\ Lemma~\ref{lem:closure-trivial} below). What was
left open in \cite{DBLP:conf/mfcs/BruseSL21} is the exact nature of the class of words on which all fixpoint definitions
have finite closure ordinal, i.e., the class of words with \emph{finite convergence}.
This paper settles this question by showing that a word has finite convergence 
iff it is almost periodic. An almost-periodic word $w$ is one where each factor $v$ either appears only finitely often,
or within each factor of length $k_v$, where $k_v$ depends on $v$ and $w$. Intuitively, $v$ re-appears in $w$ after
at most $k_v$ steps, but not necessarily in exactly $k_v$ steps. 

This characterization of the words with finite convergence also yields a decidability
result for this class, provided that the word be effectively constructible and the $k_v$ are computable from $v$ and $w$.
This was already shown by Semenov \cite{semenov}. A central step in Semenov's result and our proof are very similar, as this
concerns the behavior of deterministic automata on almost-periodic words (Lemma~\ref{lem:fac:finite} below, Thm.~3 in e.g., \cite{DBLP:journals/tcs/MuchnikSU03}).
In fact, our result can be obtained directly from some formulations of this result (cf.\ Thm.~3.1.3 in \cite{karimov2023algorithmic}).
We opt to present a direct proof in this paper, for two reasons: first, the proof in e.g., \cite{DBLP:journals/tcs/MuchnikSU03}
is not strong enough to yield our result directly, while that in \cite{karimov2023algorithmic} is rather unwieldy. Our proof,
on the other hand, is relatively straightforward and uses very basic tools. Second, our direct presentation also
yields a normal-form result for omega-regular expressions over almost-periodic words (see Rem.~\ref{rem:nf} below). 
While related results exist \cite{DBLP:journals/pacmpl/AlmagorKKO021}, this version is new as far as we know.

The paper is structured as follows. In Sect.~\ref{sec:prel} we recall the linear-time $\mu$-calculus, some basic automata theory,
and some basic definitions on infinite words. In Sect.~\ref{sec:conv} we present definitions around convergence of fixpoint
definitions, recall some known results, and show some rather simpler characterizations. Sect.~\ref{sec:finite} contains
the proof that all fixpoints are equivalent to a finite approximation on almost-periodic words. We conclude in Sect.~\ref{sec:concl}.

\section{Preliminaries}
\label{sec:prel}
\paragraph*{Words and Languages.}

An \emph{alphabet} $\Sigma$ is a finite set. A $\Sigma$-\emph{word} is a finite or infinite sequence of symbols $w = a_0\dotsb a_{n-1}$ or $a_0 \dotsb$; its \emph{length} $|w|$ is $n$ in the former case, and $\omega$
in the latter; $\epsilon$ is the empty word. For $w = a_0\dotsb$ of length $> i$, we let $w(i)$ denote $a_i$, and $w^i$ the suffix
of $w$ starting with $w(i)$, inclusively. \emph{Concatenation}
of words is indicated by juxtaposition; note that concatenation of $u$ and $v$ implies
that $u$ is a finite word. For a word $w$, let $\factor{w}$ denote the set of \emph{factors} of $w$, i.e.,
the set of $v$ such that $w = uvv'$, and for an infinite $w$, let $\ifactor{w}$ denote the set of
factors of $w$ that occur infinitely often in $w$.

An infinite word is \emph{periodic} if it is of the form $u v^\omega$ for some finite, nonempty $u,v$. 
A factor $u$ is \emph{recurrent} in $w$ if it occurs infinitely often in $w$; it is \emph{uniformly recurrent}
if it occurs in every factor of length $k$ for some $k$ depending on $u$ and $w$. In this case, the smallest
such $k_u$ is called the \emph{recurrence distance} of $u$, we denote it by $\recurrence{u}{w}$, or just $\recurrence{u}{}$ if $w$ is clear from context. An infinite word $w$ is called \emph{uniformly recurrent} if every factor either is uniformly recurrent in $w$, or it does not appear in $w$ at all.
An infinite word $w$ is \emph{almost periodic} if
every factor $u$ is either uniformly recurrent in $w$ or it does not appear in $w$ after the prefix of some length $k_u$. In this case, we also refer to $k_u$ as the recurrence distance.

An infinite word $w$ is \emph{effectively constructible} if there is a computable function $f_w\colon \mathbb{N} \to \Sigma$ that maps $n$ to $w(n)$.
 It is \emph{effectively almost periodic}, resp.\ \emph{effectively uniformly recurrent},
if it is effectively constructible and there is a computable function $f \colon \Sigma^* \to \mathbb{N}$ that
returns the recurrence distance for any finite word.

$\Sigma^*$ is the set of all finite $\Sigma$-words, and $\Sigma^\omega$ is the set of
all infinite $\Sigma$-words. 
A \emph{language} of finite or infinite words is a subset of $\Sigma^*$, resp.\ $\Sigma^\omega$. 
Concatenation $L_1 \cdot L_2$ is defined as $\{uv \mid u \in L_1, v \in L_2\}$, 
and for a language of finite words $L$, the \emph{iteration} $L^*$ is defined as $\bigcup_{i \geq 0} L^i$
where $L^0 = \{\epsilon\}$ and $L^{i+1} = L \cdot L^i$.

A language $L$ of finite words is \emph{prefix-free} if there are no $u,v \in L$ such 
that $v = v_1 v_2$ with $v_1 = u$ and $v_2 \not = \epsilon$. Note that prefix-free
languages cannot simultaneously contain $\epsilon$ and another word.

\paragraph*{Regular Expressions and Automata.}

Regular expressions (for finite $\Sigma$-words) are defined inductively from the grammar
\[
r \Coloneqq \emptyset \mid  \epsilon \mid a \mid r + r \mid r \cdot r \mid r^*
\]
where $a \in \Sigma$. We write $\text{REG}(\Sigma)$ for the set of regular expression over $\Sigma$. We may use typical syntactic sugar such as complementation $\overline{r}$.
A regular expression $r$ defines a finite-word language $L(r)$, inductively given via
\begin{align*}
L(\emptyset) &= \emptyset & L(\epsilon) &= \{\epsilon\} \\
L(a) &= \{a\} & L(r_1 + r_2) &= L(r_1) \cup L(r_2) \\
L(r_1 \cdot r_2) &= L(r_1) \cdot L(r_2) & L(r^*) &= (L(r))^*.
\end{align*}

A \emph{Nondeterministic Finite Automaton with Regular Expressions} (NFA$_\text{reg}$) is an $\M{A} = (Q, \Sigma, \delta, I, F)$ 
where $Q$ is a finite, nonempty set of states, $\Sigma$ is an alphabet, $I,F \subseteq Q$
are the sets of \emph{initial} resp.\ \emph{final} states, and $\delta \subseteq Q \times \text{REG}(\Sigma) \times Q$ is the \emph{transition relation}, which is assumed to be finite. A \emph{run} of $\M{A}$ on a word $w = a_0 \dotsb a_{n-1}$ is partition $w = u_0 \dotsb u_{m-1}$
of $w$ together with a sequence $q_0,\dotsc,q_m$ of states such that (\textsc{i}) $q_0 \in I$
and (\textsc{ii}) for all $0 \leq i < m$ there is $(q_i, r_i, q_{i+1}) \in \delta$ such that
$u_i \in L(r_i)$. A run is \emph{accepting} if it ends in a final state. The language $L(\M{A})$ of $\M{A}$
is the set of all finite words on which there is an accepting run of $\M{A}$. 
Given $q \in Q$, define $L(\M{A}_q)$ as $L(\M{A}')$ where $\M{A}'$ is obtained from $\M{A}$ by making
$q$ the unique initial state.

A special case of an NFA$_\text{reg}$ is the ordinary NFA, in which 
all regular expressions are of the form $a$, i.e., all transitions are labeled by, potentially
different, single-word languages that each contain a word of length $1$. Such an NFA
is called \emph{deterministic} or DFA if, for each $q \in Q, a \in \Sigma$, there is at most one $q' \in Q$
such that $(q, a, q') \in \delta$, and also $|I| = 1$. It is well-known that all these models
are equivalent and define exactly the regular languages.
See e.g.\ \cite{esparza2023automata} for a more thorough exposition of NFA$_\text{reg}$.
 
A \emph{Trivial Automaton} (TrA) \cite{DBLP:conf/mfcs/BruseSL21} is an $\M{A} = (Q, \Sigma, \delta, q_I, F, b)$ where
$Q, \Sigma, F$ are as for NFA$_\text{reg}$, $\delta \subseteq (Q\setminus F) \times \Sigma \times Q$
is deterministic on $Q \setminus F$ and $q_I$ is the unique initial state while $b$ is a bit. Hence,
such a TrA is almost a DFA, except that it works on infinite words and that final states have no successors.
Hence, a run of $\M{A}$ on  $w = a_0\dotsb$ is a finite or infinite sequence $q_I = q_0,\dotsc$
such that either (\textsc{i}) for all $i \geq 0$ we have that $(q_i, a_i , q_{i+1}) \in \delta$,
or such that there is $j \geq 0$ such that, for all $0 \leq i < j$ we have that $(q_i, a_i,q_{i+1}) \in \delta$
and $q_j \in F$. Such a run is accepting if either $b =  0$ and the run is infinite, or if $b=1$ and the run is finite (and, hence, ends in a final state). 
The intuition here is that a TrA is a parity automaton with only one  priority, indicated by $b$, and that runs can additionally
stop early on reaching a final state in which the automaton either accepts or rejects, in a way opposite to that
indicated by its lone priority.

A trivial automaton $\M{A}$ has \emph{bounded runs} on some word $w$ if all runs of $\M{A}$ either stop after at most $k$ steps for some $k$ depending
only on $w$ and $\M{A}$, or go on forever. A word $w$ has bounded runs if all trivial automata have bounded runs on $w$ (but not necessarily
with the same bound).

\paragraph*{The Linear-Time $\mu$-Calculus.}

Let $\Sigma$ be an alphabet, and let $\M{X}$ be a set of \emph{variables}. The syntax of the \emph{Linear-Time $\mu$-calculus} (\mucalc) is given inductively by the following grammar
\[
\varphi \Coloneqq a \mid \varphi \vee \varphi \mid \neg \varphi \mid \nxt \varphi \mid X \mid \mu X.\ \varphi
\]
where $a \in \Sigma$ and $X \in \M{X}$. Moreover, as typical for the $\mu$-calculus, all occurrences of $X$
in $\mu X.\ \varphi$ must occur under an even number of negations.
The notions of free and bound variables are defined as per usual, and we assume that each variable $X$ is bound
at most once in a formula $\varphi$, giving rise to a function $\fp_\varphi$ where $\fp_\varphi(X)$ is the unique formula $\psi$
where $\mu X.\ \psi$ is a subformula of $\varphi$.

We use syntactic sugar such as $\wedge$ and $\nu X.\ \varphi$, where the latter defines greatest fixpoints
as opposed to the least fixpoints defined by $\mu X.\ \varphi$. 

The semantics of an $\mucalc$ formula $\varphi$ on an infinite word $w = a_0 \dotsb$ makes use of \emph{variable assignments}; such a variable assignment is an $\alpha \colon \M{X} \to \mathbb{N}$. Its update $\alpha[X \mapsto S]$ for some $S \subseteq \mathbb{N}$ is defined as usual.
The semantics $\sem{\varphi}{w}{\alpha}$ of $\varphi$ on $w$, relative to $\alpha$ is then
a subset of $\mathbb{N}$ defined 
inductively via
\begin{align*}
\sem{a}{w}{\alpha} &= \{i \in \mathbb{N} \mid w(i) = a\} \\
\sem{\psi_1 \vee \psi_2}{w}{\alpha} &= \sem{\psi_1}{w}{\alpha} \cup \sem{\psi_2}{w}{\alpha} \\
\sem{\neg \psi}{w}{\alpha} &= \mathbb{N} \setminus \sem{\psi}{w}{\alpha} \\
\sem{\nxt \psi}{w}{\alpha} &= \{i \in \mathbb{N} \mid i+1 \in \sem{\psi}{w}{\alpha}\} \\
\sem{X}{w}{\alpha} &= \alpha(X) \\
\sem{\mu X.\ \psi}{w}{\alpha} &= \bigcap \{S \subseteq \mathbb{N} \mid \sem{\psi}{w}{\alpha[X \mapsto S]} \subseteq S\}.
\end{align*}
The least fixpoint in the last line is guaranteed to exist since all the operators (with the exception of negation, which is subject to the restriction mentioned above) have monotone semantics. It is clear that 
we can drop $\alpha$ for closed formulas, whence we also write $w(i) \models \varphi$ to indicate
$i \in \sem{\varphi}{w}{\alpha}$ for an $\alpha$. We might also write $w \models \varphi$ to indicate
that $w(0) \models \varphi$, and similarly for e.g., $w^i$.

Two $\mucalc$ formulas $\varphi_1, \varphi_2$ are \emph{equivalent}, written $\varphi_1 \equiv \varphi_2$, if $\sem{\varphi_1}{w}{\alpha} = \sem{\varphi_2}{w}{\alpha}$ for all $w, \alpha$.
Since $\sem{\mu X.\ \psi}{w}{\alpha}$ is a fixpoint, it is immediate that $\mu X.\ \psi \equiv \psi[\mu X.\ \psi/X]$ where the latter indicates syntactic substitution of $X$ by $\mu X.\ \psi$.

\section{Fixpoint Convergence}
\label{sec:conv}

Let $\M{T}$ be a some transition system (e.g., a word). It is well known that any $\mu$-calculus formula of the form $\mu X.\ \varphi$ defines an ordinal-indexed sequence of approximations
via $X^0 \coloneqq \emptyset$, $X^{\alpha +1} = \sem{\varphi}{w}{[X \mapsto X^\alpha]}$ and $X^\lambda = \bigcup_{\beta < \lambda} X^\beta$,
for limit ordinals $\lambda$. The least ordinal $\alpha$ such that $X^\alpha = X^{\alpha+1}$ is then the closure ordinal of $\mu X.\ \varphi$ on $\M{T}$,
denoted by $\cl{\varphi}{\M{T}}$. Given a formula, the supremum of $\cl{\varphi}{\M{T}}$ over all transition systems, if it exists, is then the closure
ordinal of the formula. Conversely, given a transition system, the supremum of $\cl{\varphi}{\M{T}}$ over all formulas is then the closure
ordinal of the system.

Of course, over infinite words, the closure ordinal of a formula is always at most $\omega$. We say that a formula $\mu X.\ \varphi$ is equivalent to its $i$th unfolding on $w$ if $\cl{\mu X.\ \varphi}{w}$ is $i$. Moreover, $\mu X.\ \varphi$ is equivalent to a finite unfolding if there is $i$ such that it is equivalent
to its $i$th un folding.
In this case, we also have access to a syntactic unfolding, obtained via
$X^0 = \myfalse$ and $X^{i+1} = \varphi[X^i/X]$, i.e., by syntactically unfolding the fixpoint definition $i$ times, where the remaining occurrences
of $X$ are replaced by $\myfalse$. For greatest fixpoints, an analogous version with $\mytrue$ instead of $\myfalse$ is available.

The situation becomes more involved when dealing with multiple and nested fixpoints, in particular if these are of differing polarity and mutually recursive.
One reason for the added difficulty is the organization of an unfolding process, which, if it is to yield a finite unfolding, needs to terminate. The other 
reason is non-monotonic behavior when dealing with mutually recursive fixpoint of opposite polarities (cf.\ \cite{DBLP:conf/mfcs/BruseSL21}, Ex.~8). However,
if all fixpoint definitions involved are equivalent to some finite unfolding, then the unfolding process is quite robust, and yields equivalent formulas (cf.~\cite{DBLP:conf/mfcs/BruseSL21}, Lemma~7). Hence, we shall speak of \emph{the $i$th unfolding} of a formula $\varphi$ if it is obtained in the following manner:
\begin{definition}{(\cite{DBLP:conf/mfcs/BruseSL21}, Def.~2)}
\label{def:unfolding}
Let $\hat{\mu} = \myfalse$ and $\hat{\nu} = \mytrue$. Let $i \geq 0$ and let $\varphi$ be a formula of the linear-time $\mu$-calculus.
Let $X_0,\dotsc,X_{k-1}$ be an enumeration of the fixpoint variables of $\varphi$ such that $X_i$ is bound by $\sigma X_i.\ \psi_i$, and such that
the $\sigma_i X_i.\ \psi_i$ is not a subformula of $\sigma_g X_j.\ \psi_j$ if $i < j$. Define a sequence $\varphi_{k-1}^n,\dotsc,\varphi_o^n$ with $\varphi_{k-1}^n = \varphi$ and
\[
\psi^0_i = \hat{\sigma_i}, \qquad \qquad \psi_i^{j+1} = \fp_{\varphi^n_i}(X_i)[\psi_i^j/X_i], \qquad \varphi_{i-1}^n = \varphi_i^n[\psi_i^n/\sigma_i X_i.\ \fp_{\varphi_i}(X_i)].
\]
Then $\varphi_0^n$ is the $n$th (bottom-up) unfolding of $\varphi$.
\end{definition}
In other words, this procedure unfolds each fixpoint definition $n$ times, starting with one that is least in the syntax tree, and then continues upwards in the syntax tree. Cf.~\cite{DBLP:conf/mfcs/BruseSL21} for further details.

We then say that, over a word $w$, a formula $\varphi$ is equivalent to its $n$th unfolding $\varphi^n$ if $\varphi^n, \varphi^{n+1},\dotsc$ and $\varphi$
all define the same set on $w$.

\begin{definition}
\label{def:finite-unfolding}
A word $w$ has \emph{finite convergence} of $\mu$-calculus formulas (or simply finite convergence) if every formula $\varphi$ is equivalent to its $n_\varphi$th unfolding over $w$,
where $n_\varphi$ depends only on $w$ and $\varphi$.
\end{definition}
Note that this means that there might be formulas that require arbitrarily large unfoldings. In fact, all but the simplest words behave like this.

\begin{lemma}
\label{lem:closure-trivial}
Let $w$ be a some word. If there is $n$ such that all formulas of the linear-time $\mu$-calculus are equivalent to their $n$th unfolding
(and, hence, the closure ordinal of $w$ is $n$),
then $w$ is ultimately periodic.
\end{lemma}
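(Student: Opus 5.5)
We prove the contrapositive: if $w$ is not ultimately periodic, then for every $n$ we construct a formula that is not equivalent to its $n$th unfolding on $w$. We use formulas whose fixpoint approximations count distances. Fix a single least fixpoint $\varphi_{\psi} = \mu X.\ \psi \vee \nxt X$ (``eventually $\psi$''). Its $i$th approximation $X^i$ is the set of positions $j$ such that $\psi$ holds at one of $j,\dots,j+i-1$. Hence $\varphi_\psi$ is equivalent to its $n$th unfolding on $w$ iff every position satisfying $\mu X.\ \psi\vee\nxt X$ sees a $\psi$-position within fewer than $n$ steps. Since the unfolding of Definition~\ref{def:unfolding} leaves a fixpoint-free $\psi$ untouched, $\varphi_\psi^n$ is exactly the syntactic $n$-fold unfolding.

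Assume, towards a contradiction, that a uniform $n$ exists. We distinguish two cases for $w$.

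\emph{Case 1: some factor $v$ occurs infinitely often but is not uniformly recurrent.} Let $\psi_v$ be the modal formula $a_0\wedge\nxt(a_1\wedge\nxt(\dots))$ expressing that $v$ starts at the current position. Then $\mu X.\ \psi_v\vee\nxt X$ holds at every position, because $v$ recurs. However, there are arbitrarily long factors of $w$ avoiding $v$, so the closure ordinal is $\omega$. This contradicts the assumption for $n$ large.

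\emph{Case 2: every factor is either uniformly recurrent or occurs only finitely often.} Then $w$ is almost periodic. Here we use that the $n$-uniform bound forces finitely many ``types''. The claim is that if every formula converges within $n$ steps, then the theory of each suffix $w^j$ is determined by its prefix of length about $n$. Concretely, take any closed formula $\chi$. By the assumption, $\chi$ is equivalent on $w$ to its fixpoint-free unfolding $\chi^n$, whose modal depth is bounded by $n\cdot|\chi|$. That bound depends on $\chi$, however, and this is the main obstacle: we need a \emph{uniform} depth bound.

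To obtain one, we exploit the uniformity of $n$ with a single family of formulas. The plan is to show that the suffixes $w^j$ fall into only finitely many bisimulation classes, i.e.\ that there are only finitely many distinct suffixes. Two distinct suffixes $w^j\neq w^k$ first differ at some offset $d$, and some fixed-shape formula ``eventually reach a position where the current letter differs from the letter $d$ steps ahead'' would need many unfoldings. A cleaner route is the following. If $w$ is not ultimately periodic, it has infinitely many distinct suffixes, so by pigeonhole it has arbitrarily long factors $u$ occurring at two positions $j<k$ whose continuations eventually differ. We then build, for a letter pair $a\neq b$, the formula
\[
\theta = \mu X.\ \big(\nu Y.\ (\textstyle\bigvee_{c}(c\wedge\nxt\dots))\big)\vee \nxt X,
\]
or, more simply, we use $\nu Y.\ \bigwedge_{c\in\Sigma}(c\to\nxt^{p}c)\wedge\nxt Y$ (``periodic with period $p$ from here''), a greatest fixpoint in which $p$ enters only through the fixed formula.

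Since $w$ is not ultimately periodic, this formula is false everywhere. Its $n$th approximation, though, is true at positions where the period $p$ persists for $n$ steps. Choosing $p$ and the position via a long repeated factor (which exists in an infinite word over a finite alphabet, e.g.\ a square $uu$ with $|u|=p$ and long $u$-periodic stretches; if no such stretches exist, we switch to the Case~1 argument) shows non-convergence at $n$. The hardest step is guaranteeing that arbitrarily long $p$-periodic stretches exist for a \emph{fixed} $p$. If they do not, we vary $p$, noting that the bound $n$ must also be uniform over $p$. This uniformity is exactly what the hypothesis supplies, so we let $p$ grow along a sequence of squares $uu$ in $w$: a square whose length exceeds $n$ yields a position where the $n$th approximation of the period-$|u|$ formula is true while the fixpoint is false.
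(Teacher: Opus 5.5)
Your Case~1 is correct, but Case~2 has a genuine gap. The argument you finally commit to needs, for every $n$, a square $uu$ in $w$ with $|u|\ge n$. A square of total length $>n$ is not enough, because the $n$th approximation of $\theta_p=\nu Y.\ \bigwedge_{c\in\Sigma}(c\to\nxt^{p}c)\wedge\nxt Y$ holds at $j$ iff $w(j+l)=w(j+l+p)$ for all $l<n$. Such squares need not exist. The ternary Thue--Morse word, the fixed point of the primitive morphism $2\mapsto 210$, $1\mapsto 20$, $0\mapsto 1$, is uniformly recurrent (hence almost periodic), not ultimately periodic, and square-free. It therefore lands in your Case~2 and contains no square at all. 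This also kills your fixed-$p$ variant, since a factor of length $n+p$ with period $p$ and $n\ge p$ contains a square. Your fallback ``switch to the Case~1 argument'' is unavailable there, because in Case~2 every recurrent factor is uniformly recurrent. The earlier remarks about types and modal-depth bounds, and the unfinished formula $\theta$, do not close this.

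The missing observation is that no square is needed. The condition above only says that the factors of length $n$ starting at $j$ and at $j+p$ coincide, and $p$ may exceed $n$. By pigeonhole, two of the positions $0,\dots,|\Sigma|^n$, say $j<j+p$, carry the same factor of length $n$. Then the $n$th unfolding of $\theta_p$ is true at $j$, whereas $\theta_p$ is false everywhere because $w$ is not ultimately periodic. This gives the contrapositive for every $n$ at once and makes your case split unnecessary. The paper argues differently: it uses only the formulas $\mu X.\ v\vee\nxt X$ from your Case~1, for recurrent factors $v$. A uniform $n$ forces each such $v$ to reappear within $n$ positions, and ultimate periodicity is derived from that.
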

\begin{proof}
If a word is ultimately periodic then its bisimulation quotient is finite. On a finite structure, any fixpoint definition is equivalent to a finite unfolding.

Conversely, let $w$ be a $\Sigma$-word such that every fixpoint definition is equivalent to its $n$th unfolding. Let $v \in \ifactor{w} \cap \Sigma^n$. By
the pigeonhole principle, $v$ exists. Then the formula $\mu X.\ v \vee \nxt X$, where $v$ is shorthand for $v_0 \wedge \nxt (v_1 \wedge \nxt ( \dots \nxt v_{k}))$,
is equivalent to a finite unfolding, i.e., to $\bigvee_{0 \leq j \leq n} \nxt^j v$. Hence, $v$ appears at most every $n$ positions. Since this holds for any such $v$, we have that $w$ is ultimately periodic of the form $uv^\omega$. 
\end{proof}

A natural follow-up question is for which words all formulas are equivalent to some finite unfolding, but not necessarily with a uniform bound. Obviously,
ultimately periodic words have this property. However, a more interesting question is after words that are not ultimately periodic, but still have finite
convergence. In \cite{DBLP:conf/mfcs/BruseSL21}, a word with this property was presented, and also a sufficient characterization of finite convergence
\begin{proposition}{(\cite{DBLP:conf/mfcs/BruseSL21})}
A word $w$ has finite convergence if all trivial automata have bounded runs on $w$.
\end{proposition}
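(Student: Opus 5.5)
We prove that bounded runs of all trivial automata on $w$ imply finite convergence, by structural induction on formulas, working bottom-up as in Definition~\ref{def:unfolding}. The key observation is that over infinite words every closed $\mucalc$ formula defines an $\omega$-regular property of suffixes. So it suffices to show two things: each single fixpoint $\mu X.\ \psi$, with $\psi$ already replaced by a fixpoint-free equivalent, converges after finitely many steps on $w$; and the resulting finite unfoldings can be substituted upwards. The substitution step is justified by the robustness result (\cite{DBLP:conf/mfcs/BruseSL21}, Lemma~7). Once all inner fixpoints have been replaced by finite unfoldings equivalent on $w$, the current formula is, on $w$, equivalent to a fixpoint-free formula in $X$.

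First, we assume inductively that $\psi$ is equivalent on $w$ to a formula $\chi$ built from letters, Boolean connectives, $\nxt$ and $X$, possibly with free outer variables treated as fixed parameters. Such a $\chi$ only inspects a bounded window: there is $d$ such that whether $i \in \sem{\chi}{w}{[X\mapsto S]}$ depends only on $w(i)\dotsb w(i+d)$ and on $S \cap [i,i+d]$.

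Second, we unfold $\mu X.\ \chi$ and recast it as a TrA. For the least fixpoint, $i \in X^{n}$ iff the approximation game from $i$ is won by the existential player within $n$ rounds. Since the existential choices in the window are determined by the finitely many possible window contents, one can build a deterministic finite-state device that reads $w^i$ and tracks, for a finite set of ``obligations,'' which positions still need to enter $X$. Reaching a state with no open obligations means acceptance, and if obligations persist forever then $i$ is not in the fixpoint. Nondeterminism can be removed by a subset construction on obligation sets, since the window is bounded. The result is a TrA with $b=1$ whose runs stop exactly when $i$ enters the fixpoint, and the stopping step bounds the approximation stage, up to a factor $d$. For $\nu$ the dual construction with $b=0$ applies. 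One also needs the free outer variables to be fixed sets; these are themselves defined by finite formulas on $w$, and so they can be folded into the letters by a sliding-window product.

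Third, we apply the hypothesis. The TrA has bounded runs on every suffix $w^i$ (the hypothesis as stated is for $w$; I would apply it to the automaton started from $q_I$ at every position, reading the word as $w^i$). Runs therefore either stop within $k$ steps or never stop, so $X^{kd+c} = X^{\omega}$, and the $n$th unfolding for $n = kd+c$ agrees with $\mu X.\ \chi$ on $w$.

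The main obstacle is the second step: showing that the approximation stage of a monotone window formula is faithfully measured by the stopping time of a \emph{deterministic} trivial automaton. Alternation inside $\chi$ (conjunctions over future positions) produces branching obligations, not a single path. I would try first to determinize via sets of pending obligation-positions relative to the current position. Because these sets are bounded by the window $d$, the state space stays finite, and monotonicity ensures that stopping when the set becomes empty corresponds to membership at a finite stage bounded by the run length.
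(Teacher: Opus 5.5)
Your single-fixpoint construction is sound as far as it goes. For a \emph{closed} formula $\mu X.\ \chi$ with $\chi$ fixpoint-free, the subset construction over pending obligation sets gives a trivial automaton with $b=1$ that stops exactly on the positions in the fixpoint. Since a minimal proof never revisits a position along a branch, the stopping time bounds the approximation stage, and the dual construction works for $\nu$.

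The gap is in how you treat variables bound by \emph{enclosing} fixpoints, which is where the whole difficulty of the statement lies. The paper gives no proof of its own and defers to \cite{DBLP:conf/mfcs/BruseSL21}, so your argument has to stand on its own here. Your induction replaces an inner fixpoint $\mu X.\ \chi(X,Y)$ by a fixpoint-free formula ``with free outer variables treated as fixed parameters''. You justify folding $Y$ into the alphabet because its values ``are themselves defined by finite formulas on $w$''. But $Y$ is not one fixed set. The inner fixpoint is evaluated at every approximant $Y^0, Y^1, \dotsc$ of the enclosing fixpoint, at the (different) approximants arising inside each unfolding $\varphi^n$, and at the true fixpoint value. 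Each finitely definable valuation yields a different product automaton and hence its own bound $k_Y$. The hypothesis gives no uniformity in $Y$, and the window depth of $Y^j$ grows with $j$, so nothing in your argument rules out $\sup_j k_{Y^j} = \infty$. Moreover, finite definability of the true value of the enclosing fixpoint is exactly what the induction is supposed to deliver one level up, so invoking it here is circular.

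Consequently your Step~2 cannot be applied to the enclosing fixpoint. Its body is not, on $w$, a bounded-window operator in $Y$ uniformly over the relevant valuations; already $\mu X.\ Y \vee \nxt X$ needs a lookahead that depends on the gaps of $Y$. For nesting of equal polarity you could repair this by merging the fixpoints via Beki\`c into one whose body \emph{is} a bounded-window operator. For alternating nesting there is no such merge, and this is precisely the non-monotone situation the paper alludes to. Lemma~7 of \cite{DBLP:conf/mfcs/BruseSL21} does not help, since it presupposes that every fixpoint definition, in context, already has a finite unfolding.

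Concretely, for $\varphi = \nu Y.\ \mu X.\ \chi(X,Y)$ with true value $Z$ you would need two things.
\begin{enumerate}
\item You need $Z$ to be finitely definable on $w$. Then the inner fixpoint at $Y := Z$ converges at some $m_0$, so $Z$ is a fixpoint of the $m$th inner approximation for all $m \geq m_0$. Hence $Z$ is contained in the set defined by $\varphi^n$ for all $n \geq m_0$.
\item You need a finite bound on the number of iterations of the outer fixpoint whose body contains the \emph{true} inner fixpoint. This gives the reverse inclusion.
\end{enumerate}
Point 1 can be obtained from bounded runs via a deterministic parity automaton for the closed formula. A trivial automaton that stops once some state has not recurred within the reachability bound shows that the set of infinitely often visited states is determined by a bounded window. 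Point 2, however, is a statement about the whole alternating formula, and your one-fixpoint window device does not provide it.
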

\label{prop:mfcs}
We refer to \cite{DBLP:conf/mfcs/BruseSL21} for the proof.
The converse is not hard to see either.
\begin{lemma}
\label{lem:conf-tra}
If a word has finite convergence, then all trivial automata have bounded runs on it.
\end{lemma}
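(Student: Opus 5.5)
I will prove the contrapositive: given a trivial automaton $\M{A} = (Q, \Sigma, \delta, q_I, F, b)$ without bounded runs on $w$, I construct a formula of the linear-time $\mu$-calculus that is not equivalent to any finite unfolding on $w$.

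\emph{Step 1: the formula.} First translate $\M{A}$ into a single fixpoint formula whose approximants count the steps of the run. For each state $q$ introduce a variable $X_q$. Let $\chi_F$ be the disjunction of the $X_q$ for $q \in F$; it is relevant only through the final states reached. Use the system of equations
\[
X_q \;=\; \mytrue \text{ for } q \in F, \qquad X_q \;=\; \bigvee_{(q,a,q') \in \delta} \bigl(a \wedge \nxt X_{q'}\bigr) \text{ for } q \notin F.
\]
Take it with least fixpoints; this handles the case $b=1$, and the case $b=0$ is dual. Convert it into a single formula $\mu X_{q_I}.\,\dotsb$ by the usual nesting of the mutually recursive definitions. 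Because $\M{A}$ is deterministic, position $i$ satisfies the least solution for $X_{q_I}$ iff the run of $\M{A}$ on $w^i$ reaches $F$. It lies in the $n$th approximant iff it does so within roughly $n$ steps. The step count can be scaled by $|Q|$ to account for the nested unfolding of Definition~\ref{def:unfolding}; alternatively, use a single variable $X$ together with a finite disjunction over the states encoded by bounded lookahead. I would rather avoid the latter and prove directly, by induction on $n$, that the $n$th bottom-up unfolding $\varphi^n$ is contained in the set of positions whose run reaches $F$ within $c\cdot n$ steps, for some constant $c$ depending on $|Q|$.

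\emph{Step 2: exploiting unbounded runs.} Unboundedness means that for every $k$ there is a position $i$ whose run on $w^i$ is finite but longer than $k$. Fix $n$ and pick $i$ with a finite run of length greater than $c\cdot n$. Then $i \in \sem{\varphi}{w}{}$ but $i \notin \sem{\varphi^n}{w}{}$. So no finite unfolding agrees with $\varphi$, contradicting finite convergence.

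The case $b=0$ is handled dually with $\nu$ and $\mytrue$ swapped to $\myfalse$ on final states. Alternatively, it reduces to the case $b=1$, since the runs of $\M{A}$ do not depend on $b$; boundedness is a property of the runs alone. I therefore expect to use only the least-fixpoint version.

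\emph{Main obstacle.} The delicate point is Step 1: relating the paper's bottom-up unfolding of a nested formula (Definition~\ref{def:unfolding}) to run length. Inner fixpoints are unfolded $n$ times before outer ones, so the $n$th unfolding could a priori capture runs of length up to about $n^{|Q|}$ rather than $c\cdot n$. That is still a finite bound, depending only on $n$ and $\M{A}$, and this finiteness is all the argument needs. So I would prove the weaker claim that $\varphi^n$ only contains positions whose run reaches $F$ within $g(n)$ steps, for some function $g$, by induction over the nesting depth. Each unfolding of a variable consumes at least one letter because every recursion passes through a $\nxt$ guard.
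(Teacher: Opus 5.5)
Your proposal is correct and follows essentially the same route as the paper. You encode ``the run from $q_I$ reaches $F$'' as a least-fixpoint system over the states, convert it with Beki\`c's lemma, and use finite convergence to bound run lengths by the modal depth of the finite unfolding; your encoding $\bigvee_{(q,a,q')\in\delta}(a\wedge\nxt X_{q'})$ agrees with the paper's $\bigwedge_{a} a\rightarrow\nxt X_{\delta(q,a)}$ when $\delta$ is total on non-final states. Arguing by contraposition is cosmetic. Your guardedness argument, showing that $\sem{\varphi^n}{w}{}$ contains only positions whose run reaches $F$ within the modal depth of $\varphi^n$, makes explicit the step the paper compresses into ``truth depends only on the next $n$ positions''; that containment, not locality alone, is what yields the bound.
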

\begin{proof}
Let $\M{A} = (\{q_0,\dotsc,q_{k-1}\}, \Sigma, \delta, q_I, F, b)$ be a trivial automaton. Define the following system of fixpoints:
\[
\mu \left( \begin{aligned} X_0.\ & \bigwedge_{a \in \Sigma} a \rightarrow \nxt X_{\delta(q_0, a)} \\ \vdots & \quad \vdots \\ X_{k-1}.\ & \bigwedge_{a \in \Sigma} a \rightarrow \nxt X_{\delta(q_{k-1}, a)} \end{aligned} \right)
\]
Note that this yields the empty conjunction, and, hence, $\mytrue$ for final states.
By the Beki\`c Lemma \cite{Bekic84}, this system can be converted into a standard formula of the linear-time $\mu$-calculus, which then is equivalent
to a finite unfolding, i.e., a formula of basic modal logic. Since truth of such a formula at some position depends only the next $n$ positions,
where $n$ is the modal depth of this formula, and since the trivial automaton has finite runs exactly on positions that are models of this formula,
the runs of $\M{A}$ are bounded by $n$.
\end{proof}

Finally, we observe that the word from \cite{DBLP:conf/mfcs/BruseSL21} is almost periodic. This is not a coincidence.
\begin{lemma}
\label{lem:tra-almost}
Let $w$ be a word on wich all trivial automata have bounded runs. Then $w$ is almost periodic.
\end{lemma}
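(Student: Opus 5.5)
Fix a finite word $v$; we must show that $v$ either is uniformly recurrent in $w$ or stops occurring after some finite prefix. We take the contrapositive. Assume $v$ occurs infinitely often in $w$ but is not uniformly recurrent. We will build one trivial automaton whose runs on $w$ are unbounded, which contradicts the hypothesis. The natural choice is a TrA that scans forward, looking for the next occurrence of $v$ and stopping when it completes one. This is essentially the DFA for $\Sigma^* v$, with the states that have just read $v$ made final. Let $Q$ be the states of the KMP-style automaton $\M{A}_v$: the prefixes of $v$, with $q_I = \epsilon$. Let $F = \{v\}$, and for non-final $u$ set $\delta(u,a)$ to the longest suffix of $ua$ that is a prefix of $v$. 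We set $b = 1$, although the value of $b$ does not matter for boundedness.

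First we check the behaviour of runs. Since $\M{A}_v$ is deterministic on $Q\setminus F$, the run started at position $i$ (that is, on $w^i$) is unique. It reaches $F$ after exactly $j$ steps, where $j$ is the least index such that $v$ occurs in $w^i$ ending at position $j-1$. This is the standard correctness of the KMP automaton. Because $v$ is recurrent, every such run is finite. The run from position $i$ therefore has length $\ell_i$ equal to the distance from $i$ to the end of the first occurrence of $v$ starting at or after $i$.

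Next we use the failure of uniform recurrence. For every $k$ there is a factor of $w$ of length $k$ that avoids $v$. Say it starts at position $i_k$. The run from $i_k$ has length at least $k$, and it is finite since $v$ recurs. So the runs of $\M{A}_v$ on the suffixes of $w$ are finite but of unbounded length, and $\M{A}_v$ does not have bounded runs.

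The main obstacle is a definitional subtlety: the notion of bounded runs, read literally, refers to runs on $w$ itself, which start at position $0$. A run from position $0$ is a single run, so that reading would be trivial. The reading that makes both Lemma~\ref{lem:conf-tra} and Proposition~\ref{prop:mfcs} meaningful is that runs may be started at every position of $w$, that is, on every suffix $w^i$. That is how the fixpoint formula in the proof of Lemma~\ref{lem:conf-tra} is evaluated at each position. I will state this reading explicitly and use it. If one insists on runs from position $0$ only, I would instead modify $\M{A}_v$ by adding a nondeterministic "start now" guess. This is unavailable because trivial automata are deterministic, which is a further reason to adopt the suffix reading. With that settled, the two cases of the definition of almost periodicity are exhausted, and $w$ is almost periodic.
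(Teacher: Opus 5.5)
Your proof is correct and is essentially the paper's argument. The paper also uses the KMP automaton for $v$, obtained as a trivial automaton for $\mu X.\ v \vee \nxt X$, and reads the recurrence distance off the run bound; it argues directly where you use the contrapositive. Your explicit choice to read ``bounded runs'' as ranging over runs started at every position $w^i$ is what the paper assumes tacitly, for instance when it speaks of runs at positions in the proof of Lemma~\ref{lem:conf-tra}.
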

\begin{proof}
Let $v$ be any factor of $w$, and let $\varphi_v = \mu X.\ v \vee \nxt X$, where again  $v$ is shorthand for $v_0 \wedge \nxt (v_1 \wedge \nxt ( \dots \nxt v_{k}))$.
Let $\M{A}_v$ be a trivial automaton for $\varphi_v$. It can be obtained directly using the Knuth-Morris-Pratt Algorithm \cite{DBLP:journals/siamcomp/KnuthMP77}.
Since all runs of $\M{A}_v$ stop after at most $k$ steps for some $k$, the factor $v$ either does not occur after a finite prefix, or it occurs within
each factor of length $k$. Since $v$ was arbitray, $w$ is almost periodic.
\end{proof}
The rest of the paper is dedicated to showing the converse, i.e., that trivial automata have bounded runs on almost-periodic words. Together with Prop.~\ref{prop:mfcs} and Lemmas~\ref{lem:conf-tra} and \ref{lem:tra-almost}, this yields a complete characterization of the words with finite fixpoint convergence
as exactly those that are almost periodic.

\section{Characterization of Finite Fixpoint Convergence}
\label{sec:finite}
For the remainder of this section, fix an alphabet $\Sigma$. All languages
are $\Sigma$-languages. We now show that almost-periodic words have finite convergence,
by showing that all trivial automata have bounded runs on them. The main
argument consists of rewriting a trivial automaton into an equivalent regular expression,
maintaining properties such as prefix-freeness of suitable constituent sub-espressions
throughout the rewriting process.

\begin{lemma}
\label{lem:fac:finite}
Let $w$ be almost periodic, and let $L$ and $L'$ be two \emph{finite} languages of finite words such
that $L^* L'$ is prefix-free. Then $L^* L \cap \factor{w}$ is finite.
\end{lemma}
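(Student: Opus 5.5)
The plan is to argue by contradiction. Suppose $L^* L \cap \factor{w}$ is infinite. Since $L$ is finite, its words have bounded length. So there are factors of $w$ of the form $u_1 u_2 \dotsb u_n$ with all $u_i \in L$ and $n$ arbitrarily large. By K\"onig's Lemma, applied to the finitely branching tree of decompositions (finitely many choices from $L$ at each step), we obtain an infinite sequence $u_1 u_2 \dotsb$ over $L$ such that every finite product $u_1 \dotsb u_n$ is a factor of $w$.

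Next I would pass to recurrent factors. Each $u_1 \dotsb u_n$ either occurs only inside a bounded prefix of $w$ or is uniformly recurrent. There are only finitely many factors of bounded length inside any fixed prefix, while $n$ is unbounded. Choosing the occurrences beyond the prefix in which non-recurrent factors live, I may therefore assume that all $u_1 \dotsb u_n$ lie in $\ifactor{w}$, hence are uniformly recurrent with recurrence distance $\recurrence{u_1\dotsb u_n}{w}$. Since $L$ is finite, pigeonhole yields indices $i<j$ with $u_i = u_j$. More usefully, it yields a finite block $x = u_i \dotsb u_{j-1} \in L^+$ such that arbitrarily long powers-like products of blocks from $L$ recur uniformly in $w$.

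The heart of the argument, and the step I expect to be the main obstacle, is to exploit prefix-freeness of $L^* L'$. Take any $y \in L'$. Both $y$ and $u_1 \dotsb u_n y$ lie in $L^* L'$, for every $n$, so none of these words may be a proper prefix of another. I would use uniform recurrence to locate, inside a window of length $\recurrence{\cdot}{w}$, an occurrence of a word of $L^*L'$ overlapping a long $L$-product. From two such overlapping parsings of the same stretch of $w$ I would extract two distinct elements of $L^* L'$, one a proper prefix of the other, which is the desired contradiction.

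Here lies a real difficulty. This extraction needs $w$ to actually contain words of $L'$ next to the long $L$-products. The hypotheses as worded do not seem to force this. For instance, $w = a^\omega$, $L=\{a\}$, $L'=\{b\}$ appears to satisfy them, with $L^*L = a^+ \subseteq \factor{w}$ infinite. So when writing the proof I would first settle how the statement is intended to be read, for example whether it is implicitly assumed that occurrences of long $L$-products in $w$ are continued by some word of $L'$, as in the later application to runs of trivial automata. Only under such a reading would I carry out the contradiction above.
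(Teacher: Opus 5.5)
Your counterexample is correct: with $w=a^\omega$, $L=\{a\}$, $L'=\{b\}$, the language $L^*L'=a^*b$ is prefix-free, yet $L^*L=a^+\subseteq\factor{w}$. The conclusion is a typo for ``$L^*L'\cap\factor{w}$ is finite''. That is what the paper's proof actually argues: it reduces to $L'=\{v\}$ and refutes infiniteness of $L^*\{v\}\cap\ifactor{w}$. It is also what Lemma~\ref{lem:keep-simple} uses, and your proposed reading amounts to the same statement. However, for this corrected statement your proposal contains no proof. The K\"onig branch and the repetition $u_i=u_j$ are never used afterwards. The passage to recurrent factors is also not justified as written: the prefix bound $k_u$ depends on $u$, and all prefixes of a K\"onig branch may be non-recurrent. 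The right move is to take suffixes of long parsed occurrences starting at block boundaries. These are again in $L^*\{v\}$ and occur arbitrarily far to the right. For each length range there are only finitely many candidates, so one of them occurs arbitrarily late and is therefore recurrent by almost periodicity.

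The real gap is the step ``from two such overlapping parsings \dots\ extract two distinct elements of $L^*L'$, one a proper prefix of the other''. Two overlapping $L$-parsings need not share a cut point (take $L=\{aa\}$, $v=b$, and two parsings of a run of $a$'s starting at positions of different parity), and without a common cut nothing can be spliced. The missing idea is to force a common cut by pigeonhole over $k+1$ \emph{nested} parsings, where $k$ bounds the word lengths in $L\cup\{v\}$ (the case $v=\epsilon$ is trivial).

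Put $v_0=\epsilon$ and pick $v_{i+1}\in L^*$ with $v_{i+1}v\in\ifactor{w}$ and $|v_{i+1}|\ge\recurrence{v_iv}{w}$. Then $v_iv$ occurs inside $v_{i+1}$, so $v_kv$ contains a nested chain of occurrences of $v_0v,\dots,v_kv$. All of them cover one innermost occurrence of $v$, say at position $p$. Each of the $k+1$ parsings has a block starting in $\{p-k+1,\dots,p\}$, so two of them, $j<j'$, share a cut point $p'$ there. Take the $L$-blocks of parsing $j'$ before $p'$, followed by the blocks of parsing $j$ from $p'$ on. This is a word of $L^*\{v\}$ ending where $v_jv$ ends, i.e.\ inside $v_{j'}$, hence a proper prefix of $v_{j'}v$, contradicting prefix-freeness. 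What makes the argument work is this iterated use of recurrence distances, not a single pair of overlapping occurrences.
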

\begin{proof}
It suffices to show this for $L' = \{v\}$ for some $v$, since $L^* L' = \bigcup_{v \in L'} L \{v\}$.
Moreover, it is not hard to see that, if $L^* \{v\} \cap \factor{w}$ is infinite, then 
$L^*\{v\} \cap \ifactor{w}$ is also infinite. 

For the sake of contradiction, assume that $L^* \{v\} \cap \factor{w}$, and, hence, $L^*\{v\} \cap \ifactor{w}$ are infinite, for a finite,
prefix-free $L$. We can assume that $v \notin L$. Let $k = \max_{ v' \in L \cup \{v\}} |v'|$. We define a sequence $v_0,\dotsc,v_k$
as follows: $v_0 = \epsilon$, and for $i < k$, let $v_{i+1}$ be such that (\textsc{i}) $|v_{i+1}| \geq \recurrence{v_iv}{}$
and (\textsc{ii}) $v_{i+1} \in L^*$ and (\textsc{iii}) $v_{i+1} v \in \ifactor{w}$. This sequence is well-defined since, by assumption $\ifactor{w} \cap L^* \{v\}$ is infinite, whence $\recurrence{v_i v}{}$ is well-defined.

We now have the following: for each $0 \leq i \leq k$, we can decompose $v_i v$ into
$u_0^i \dotsb i_{m_i}^i$ where $u_{j}^i \in L$ if $j < m_i$ and $u_{m_i}^i = v$, i.e.,
$v_i v$ consists of blocks of length at most $k$ from $L \cup \{v\}$.  Moreover,
since $|v_{i+1}| \geq \recurrence{v_i v}{}$, for each $0 < i \leq k$ there is $p_i < |v_{i}| - | v_{i-1}v|$
such that $v_i = u v_{i-i}v u'$ for $u \in \Sigma^{p_i}$, i.e., $v_{i-1}v$ appears as an infix
of $v_i$ at position $p_i$. If there are multiple occurrences of $v_{i-1}v$, take the first one
to define $p_i$. Let $P_i$ be defined as follows:
\[
P_i = \bigcup_{0 \leq j \leq m_i} \{\big(\sum_{l = k-1}^i p_i\big) + \sum_{l=0}^{j} |u_l^i| \}
\] 
Hence, $P_i$ collects the starting positions of the individual blocks of $v_i v$
inside $v_k$. Note that $P_0$ is a singleton, as it consists solely of $p = \sum_{l = k-1}^0 p_i$.
Consider the interval of positions $p-k+1,\dotsc,p$. It is of length $k$, whence
it must contain an element of $P_i$ for all $0 \leq i \leq k$. Due to the pigeonhole principle, 
there is a position $p'$ with $p_1+k \leq p' \leq p$ such that $p' \in P_j$ and $p' \in P_{j'}$ for
$j < j'$. Hence, there is $m$ such that $u_m^j \dotsb u_{m_j}^{j}$ starts at $p$ and,
there is $m'$ such that $u_0^{j'}\dotsb u_{m'}^{j'}$ starts at $\sum_{l = 0}^{j'} p_l$
and ends at $p$. But then $v' = u_0^{j'}\dotsb u_{m'}^{j'} u_{m}^{j}\dotsb u_{m_j}^j \in L^* \{v\}$
and $v'$ is also a strict prefix of $v_{j'} v$. This contradicts prefix-freeness of $L^* \{v\}$, which
proves the lemma.
\end{proof}

\subsection{Conversion of \texorpdfstring{NFA$_\text{reg}$}{NFAreg} into Regular Expressions}

\begin{definition}
\label{def:ajoin}
Let $\M{A} = (Q, \Sigma, \delta, I, F)$ and $\M{A}' = (Q, \Sigma, \delta', I, F)$ be NFA$_\text{reg}$ over the
same alphabet with the same sets of states, initial states and final states. Then 
$\Transition{\M{A}}{\ajoin{p}{q}{r_1}{r_2}}{\M{A}'}$ if the following are true:
\begin{itemize}
\item $p, q \in Q$
\item $(p, r_1, q), (p, r_2, q) \in \delta$
\item $\delta' = (\delta \setminus \{(p, r_1, q), (p, r_2, q) \}) \cup \{(p, r_1 + r_2, q)\}$.
\end{itemize}
\end{definition}
Intuitively, this means that $\M{A}'$ is obtained by replacing two parallel edges in $\M{A}$, labeled
by $r_1$ and $r_2$, by an edge $r_1 + r_2$.

\begin{definition}
\label{def:remove}
Let $\M{A} = (Q, \Sigma, \delta, I, F)$ and $\M{A}' = (Q', \Sigma, \delta', I, F)$ be NFA$_\text{reg}$ over the
same alphabet with the same sets of states, initial states and final states. Then 
$\Transition{\M{A}}{\aremove{q}}{\M{A}'}$ if the following are true:
\begin{itemize}
\item for all $p, p' \in Q$, there is at most one $r$ with $(p, r, p') \in \delta$,
\item $q \in Q \setminus (I \cup F)$,
\item $Q' = Q \setminus \{q\}$,
\item $\delta' = \delta'' \cup \{(p, r_1 r_2^* r_3, p') \mid (p, r_1, q) \in \delta, (q, r_2, q) \in \delta, (q, r_3, p') \in \delta\}$ where $\delta'' = \delta \setminus \{(p, r, p') \mid p = q \text{ or }p' = q\}$.
\end{itemize}
\end{definition}
This means that $\M{A}'$ is obtained by removing the non-initial, non-final state $q$ and condensing all paths that go from $p$ directly to $q$ using the unique edge labeled $r_1$, potentially loop there via the loop labeled $r_2$, and then go directly to $p'$ via the unique edge labeled $r_3$, into a single edge from
$p$ to $p'$, labeled by $r_1 r_2^* r_3$. Note that the first condition requires that no parallel edges exist in $\M{A}$.

We write $\Transition{\M{A}}{}{\M{A}'}$ if either we have  $\Transition{\M{A}}{\ajoin{p}{q}{r_1}{r_2}}{\M{A}'}$
for some $p, q, r_1, r_2$, or if we have $\Transition{\M{A}}{\aremove{q}}{\M{A}'}$ for some $q$.
The following is folklore, see also \cite{esparza2023automata}.
\begin{lemma}
\label{lem:conversion:folklore}
If $\Transition{\M{A}}{}{\M{A}'}$ then $L(\M{A}) = L(\M{A}')$. In fact, for all states $p$ of $\M{A}'$,
we have $L(\M{A}'_p) = L(\M{A}_p)$.

Moreover, for each $\M{A}$ there is a sequence $\M{A} = \M{A}_0,\dotsc,\M{A}_{k-1}$
such that $\Transition{\M{A}_i}{}{\M{A}_{i+1}}$ for all $0 \leq i < k-1$ and
such that $\M{A}_{k-1}$ has only initial and final states.  
\end{lemma}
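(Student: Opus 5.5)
We prove the two parts of Lemma~\ref{lem:conversion:folklore} separately: language preservation by translating runs back and forth, and existence of the sequence by a termination argument on a measure. Throughout we use that a run of an NFA$_\text{reg}$ only needs each block $u_i$ to lie in $L(r_i)$ for the transition taken.

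\emph{Language preservation, join step.} Suppose $\Transition{\M{A}}{\ajoin{p}{q}{r_1}{r_2}}{\M{A}'}$. A run of $\M{A}$ that uses $(p,r_1,q)$ or $(p,r_2,q)$ on a block $u_i$ is a run of $\M{A}'$ using $(p,r_1+r_2,q)$ on the same block, because $L(r_1+r_2)=L(r_1)\cup L(r_2)$. Conversely, a run of $\M{A}'$ that uses the new edge on $u_i$ has $u_i\in L(r_1)$ or $u_i\in L(r_2)$, so one of the old edges can be taken instead. All other transitions, the states, the initial states and the final states are unchanged. Hence $L(\M{A}_s)=L(\M{A}'_s)$ for every state $s$.

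\emph{Language preservation, remove step.} Suppose $\Transition{\M{A}}{\aremove{q}}{\M{A}'}$ and fix a start state $s\neq q$. Given an accepting run of $\M{A}_s$, split its state sequence into maximal segments of the form $p,q,q,\dots,q,p'$ with $p,p'\neq q$. Such a segment begins and ends at states other than $q$, because $q$ is neither initial nor final and $s\neq q$. By the first condition of Def.~\ref{def:remove} there are no parallel edges, so the segment uses the unique edges $(p,r_1,q)$, the loop $(q,r_2,q)$ some $\ell\ge 0$ times, and $(q,r_3,p')$. If there is no loop at $q$, treat $r_2$ as $\emptyset$, so that $r_2^*$ still denotes $\{\epsilon\}$. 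The concatenation of the blocks read along the segment lies in $L(r_1)L(r_2)^{\ell}L(r_3)\subseteq L(r_1r_2^*r_3)$. Replacing each segment by the single new edge gives a run of $\M{A}'_s$, and all edges not touching $q$ lie in $\delta''$. Conversely, a block read on a new edge $(p,r_1r_2^*r_3,p')$ factors as $x\,y_1\cdots y_\ell\,z$ with $x\in L(r_1)$, each $y_j\in L(r_2)$ and $z\in L(r_3)$. This factorization expands back into a run of $\M{A}$ through $q$.

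\emph{Existence of the reduction sequence.} We use the measure $(|Q\setminus(I\cup F)|,\ \text{number of pairs }(p,p')\text{ with parallel edges})$ and proceed greedily. First, while some pair has at least two edges, apply a join step. Each join reduces the edge count, so finitely many joins make $\delta$ free of parallel edges. If a non-initial, non-final state $q$ remains, the remove condition now holds and we apply $\aremove{q}$. This strictly decreases the number of such states, although it may create new parallel edges, which we join again. By lexicographic induction the process terminates, and at termination only initial and final states remain.

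The main obstacle is the bookkeeping in the remove step, namely matching maximal $q$-segments to single new edges. The hypotheses carry the argument here: $q\notin I\cup F$ guarantees that every segment is bounded by states other than $q$, and uniqueness of edges identifies $r_1$, $r_2$ and $r_3$. Keeping the convention that a missing loop is read as $r_2=\emptyset$ also keeps the edge formula in Def.~\ref{def:remove} consistent.
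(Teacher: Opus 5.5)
Your proof is correct and is the standard state-elimination argument: translate runs back and forth for join and remove, and terminate by clearing parallel edges before each removal. The paper does not prove this lemma itself but cites it as folklore. Two small points. Your convention of reading a missing loop at $q$ as $r_2=\emptyset$ is actually necessary, not just convenient, because under the literal Definition~\ref{def:remove} no new edges are created when $q$ has no self-loop. Also, the second component of your lexicographic measure should be the number of edges, since a join on a pair with three or more parallel edges does not reduce the number of parallel pairs; the edge count is what you in fact use when you say each join reduces it.
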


\subsection{Properties of the Conversion Process}

Given an NFA$_\text{reg}$ $(Q, \Sigma, \delta, I, F)$, define $\aout{q} = \{r \mid (q, r, q') \in \delta\}$
and $\laout{q} = \bigcup_{r \in \aout{q}} L(r)$.

\begin{definition}
Let $\M{A} = (Q, \Sigma, \delta, I, F)$ be an NFA$_\text{reg}$. Then $\M{A}$ is \emph{minimal} if it satisfies the 
following:
\begin{itemize}
\item $I = \{q_I\}$, 
\item $F = \{q_F\}$ and there are no $r, q$ with $(q_F, r, q) \in \delta$,
\item $L(\M{A}_q) \not = L(\M{A}_{q'})$ if $q \not = q'$.
\end{itemize}
Moreover, $\M{A}$ is \emph{good} if it is minimal and it further satisfies the following:
\begin{enumerate}
\item For all $q \in Q$, the language $\laout{q}$ is prefix-free.
\item $\epsilon \notin \laout{q}$ for all states $q$.
\item For all $q \in Q$, all $r_1, r_2$ in $\aout{q}$, we have that $L(r_1) \cap L(r_2) = \emptyset$.
\end{enumerate}
Finally, let $w$ be an almost-periodic, infinite word. Then $\M{A}$ is \emph{simple} w.r.t.\ $w$ if, for all $q, q' \in Q$
and all $(q, r, q')$, we have that $L(r) \cap \factor{w}$ is finite.
\end{definition}
Intuitively, minimality requires $\M{A}$ to have unique initial and final states, and to be minimized in the classical sense.
Goodness is a sanity criterion that is trivially satisfied by e.g., deterministic automata, and simplicity is required 
to apply Lemma~\ref{lem:fac:finite}.

\begin{lemma}
\label{lem:keep-minimal}
Let $\M{A}$ be an NFA$_\text{reg}$ and let $\Transition{\M{A}}{}{\M{A}'}$. If it is minimal
then $\M{A}'$ is so. 
\end{lemma}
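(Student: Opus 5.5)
We verify the three conditions of minimality for $\M{A}'$ one at a time, treating $\mathrm{join}$ and $\mathrm{remove}$ steps separately. Both operations keep $\Sigma$, $I$ and $F$ unchanged; this holds by Definitions~\ref{def:ajoin} and~\ref{def:remove}. For $\mathrm{remove}(q)$ we also know $q \notin I \cup F$, so $q_I$ and $q_F$ both survive in $Q'$. Hence $I = \{q_I\}$ and $F = \{q_F\}$ still hold in $\M{A}'$.

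Next we check that $q_F$ has no outgoing edges in $\M{A}'$. A $\mathrm{join}(p,q,r_1,r_2)$ step replaces two edges leaving $p$ by one edge leaving $p$. If $p = q_F$, the original edges would already have left $q_F$, which minimality of $\M{A}$ forbids. So $p \neq q_F$, and no edge out of $q_F$ is created. A $\mathrm{remove}(q)$ step only adds edges $(p, r_1 r_2^* r_3, p')$ with $(p, r_1, q) \in \delta$. Again $p \neq q_F$, since $q_F$ has no outgoing edge $(q_F, r_1, q)$ in $\M{A}$. All other edges of $\M{A}'$ come from $\delta$, so they do not leave $q_F$ either.

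Finally we check that distinct states have distinct languages. Here we invoke Lemma~\ref{lem:conversion:folklore}, which gives $L(\M{A}'_p) = L(\M{A}_p)$ for every state $p$ of $\M{A}'$. Now take $p \neq p'$ in $Q' \subseteq Q$. Then $L(\M{A}'_p) = L(\M{A}_p) \neq L(\M{A}_{p'}) = L(\M{A}'_{p'})$, using minimality of $\M{A}$. For a remove step this needs $Q' \subseteq Q$, which holds by definition.

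The only delicate point is this last step, which relies on the per-state language preservation in Lemma~\ref{lem:conversion:folklore}. That preservation depends on the fact that the removed state is neither initial nor final. Then no run starting from a surviving state $p$ begins or ends at $q$; any passage through $q$ has the form $r_1 r_2^* r_3$, which the new edge captures exactly. Since the paper states this lemma as folklore, we simply cite it, and the proof is otherwise routine.
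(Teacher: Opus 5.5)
Your proof is correct and follows the paper's argument. The only substantive condition, distinctness of the state languages, is obtained exactly as in the paper from the per-state preservation $L(\M{A}'_p) = L(\M{A}_p)$ of Lemma~\ref{lem:conversion:folklore} together with $Q' \subseteq Q$, and your explicit checks that $I = \{q_I\}$ and $F = \{q_F\}$ survive and that no step creates an edge leaving $q_F$ simply fill in details the paper leaves implicit.
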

\begin{proof}
It is relatively easy to verify that this holds if $\Transition{\M{A}}{\ajoin{p}{q}{r_1}{r_2}}{\M{A}'}$,
since this action does not change $\laout{q}$ for any state $q$, and it changes $\aout{q}$ only in a harmless way.
Hence, assume that we have that x$\Transition{\M{A}}{\aremove{q}}{\M{A}'}$.
Let $\M{A} = (Q, \Sigma, \delta, \{q_I\}, \{q_F\})$. Since $q \not = q_I$ and $q \not = q_F$, let $\M{A}' = (Q', \Sigma, \delta', \{q_I\}, \{q_F\})$ where $Q' = Q \setminus \{q\}$. Moreover,
for all $p, p'$ there is at most one $r$ such that $(p, r, p') \in \delta$. Hence, $\delta' = \delta'' \cup \{(p, r_1 r_2^* r_3, p') \mid (p, r_1, q) \in \delta, (q, r_3, p') \in \delta, (q, r_2, q) \in \delta\}$
where $\delta'' = \delta \setminus \{(p, r, p') \mid p = q \text{ or } p' = q\}$.

It remains to show that $L(\M{A}'_p) \not = L(\M{A}'_{p'})$ if $p \not = p'$.
To obtain this, observe that $L(\M{A}_p) = L(\M{A}'_p)$ for all $p \in Q'$, which holds due to 
Lemma~\ref{lem:conversion:folklore}.   
\end{proof}

\begin{lemma}
\label{lem:keep-goodness}
Let $\M{A}$ be an NFA$_\text{reg}$ and let $\Transition{\M{A}}{}{\M{A}'}$. If $\M{A}$ is good 
(and therefore minimal), then $\M{A}'$ is so.
\end{lemma}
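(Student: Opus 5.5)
The plan is to treat the two kinds of steps separately. In both cases minimality of $\M{A}'$ is already given by Lemma~\ref{lem:keep-minimal}, so only the three goodness conditions need checking: prefix-freeness of $\laout{\cdot}$, $\epsilon\notin\laout{\cdot}$, and pairwise disjointness of the languages in $\aout{\cdot}$.

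\emph{Join.} Suppose $\Transition{\M{A}}{\ajoin{p}{q}{r_1}{r_2}}{\M{A}'}$. Then $\laout{s}$ is unchanged for every state $s$, so conditions 1 and 2 carry over directly. For condition 3, the only new expression is $r_1+r_2$ in $\aout{p}$. For any other $r\in\aout{p}$ we have $L(r_1+r_2)\cap L(r) = (L(r_1)\cap L(r))\cup(L(r_2)\cap L(r))$, and both parts are empty by goodness of $\M{A}$.

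\emph{Remove.} Suppose $\Transition{\M{A}}{\aremove{q}}{\M{A}'}$. Only the sets $\aout{p}$ with $(p,r_1,q)\in\delta$ change: the edge $(p,r_1,q)$ disappears, and edges $(p,r_1r_2^*r_3,p')$ are added, one for each exit edge $(q,r_3,p')$. Call the latter \emph{composite}. Every word of a composite edge has the form $u_0u_1\dotsb u_n$ with $u_0\in L(r_1)\subseteq\laout{p}$, with $u_1,\dotsc,u_n\in\laout{q}$, where $u_1,\dotsc,u_{n-1}$ come from the loop $r_2$ and $u_n$ from the exit $r_3$. Condition 2 is immediate: $u_0\neq\epsilon$ because $\epsilon\notin\laout{p}$.

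The core of the argument is a unique-factorisation observation. Suppose $x$ is a word of some edge in $\aout{p}$ in $\M{A}'$, and $z$ is another such word with $x$ a prefix of $z$ (possibly $x=z$). Compare the block decompositions of $x$ and $z$ from left to right. At each step the current blocks lie in the same prefix-free language, $\laout{p}$ for the first block and $\laout{q}$ for later ones. They are prefix-comparable because $x$ is a prefix of $z$, so they must be equal. By disjointness in $\M{A}$, equal blocks come from the same old edge.

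From this observation the cases follow.
\begin{itemize}
\item If one of $x,z$ is an old word (a single block from an edge $(p,r,p'')$ with $p''\neq q$) and the other is composite, their first blocks coincide but come from the distinct edges $r$ and $r_1$. This contradicts condition 3 for $\M{A}$.
\item If both are composite, the blocks match one by one and are generated by the same edges: loop against loop, exit against exit. Since there are no parallel edges, an exit edge is determined by its target $p'$. When $x$ reaches its exit block, $z$ is at an exit block too, so $z$ ends there, $z=x$, and both used the same target $p'$.
\item If both are old words, condition 1 and condition 3 hold as in $\M{A}$.
\end{itemize}
Taking $z=xy$ with $y\neq\epsilon$ gives prefix-freeness of $\laout{p}$ in $\M{A}'$. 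Taking $z=x$ gives disjointness: a common word determines a unique edge.

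The main obstacle is carrying out this block-by-block induction carefully. In particular, the prefix relation has to be propagated across blocks, and one must use that exits to different targets are different edges, so that disjointness forces the same target. Self-loops at $p$ (the case $p'=p$) and composite edges landing where an old edge already exists need no special treatment, since the argument never assumed that the targets are distinct from $p$ or from existing targets.
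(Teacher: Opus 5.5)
Your proposal is correct and follows essentially the same route as the paper. The join case is immediate, and for $\aremove{q}$ you split into old-versus-composite and composite-versus-composite edges, forcing the first blocks to be equal via prefix-freeness of $\laout{p}$ and separating edges via disjointness; your block-by-block induction over the individual $r_2$-iterations, which uses disjointness to rule out a loop block matching an exit block, is a more careful version of the paper's comparison of the $r_2^*$-segments as wholes.
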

\begin{proof}
By Lemma~\ref{lem:keep-minimal}, $\M{A}'$ is minimal. Again, the case of 
$\Transition{\M{A}}{\ajoin{p}{q}{r_1}{r_2}}{\M{A}'}$ is straightforward for goodness, so assume
that  $\Transition{\M{A}}{\aremove{q}}{\M{A}'}$ for some $q$.

The second item of goodness follows from goodness of $\M{A}$. Since $\epsilon \notin
\laout{p}$ for all $p$, in particular $\epsilon \notin r$ for $(p, r, p') \in \delta$ unless
$p = q_I$. Hence, $\epsilon \notin r$ for $(p, r, p') \in \delta''$, whence
$\epsilon \notin L(r_1 r_2^* r_3)$ for any $(p, r_1r_2^* r_3, p') \in \delta'$ since $\epsilon \notin r_3$.

Towards the first and third item, let $p \in Q'$ and let $r'_1, r'_2$ be such that $(p, r'_1, p_1) \in \delta'$ and $(p, r'_2, p_2) \in \delta'$ for some, not necessarily distinct $p_1, p_2$. If these edges are also in $\delta$, we are done. Hence, assume
that $r'_1 = r_1 r_2^* r_3$ for $(p, r_1, q), (q,r_2,q), (q, r_3, p_1) \in \delta$. There are two subcases:
\begin{itemize}
\item  If $r'_2 = r_1 r_2^* r'_3$ with $(q, r'_3, p_2) \in \delta$, let $u_1 \in L(r'_1)$
and $u_2 \in L(r'_2)$ be arbitrary. We show that they are incomparable, i.e., neither is a prefix of the other, unless $p_2 = p_1$ (and, hence, $r_3 = r'_3$),
in which case they can also be equal. 
We have that $u_1 = u_1^1 u_1^2 u_1^3$ 
such that $u_1^1 \in L(r_1)$, $u_2^1 \in L(r_2^*)$ and $u_3^1 \in L(r_3)$. Moreover, $u_2 = u_1^2 u_2^2 u_3^2$
with $u_1^2 \in L(r_1)$, $u_2^2 \in L(r_2^*)$ and $u_3^2 \in L(r'_3)$. 

If $u_1^1$ and $u_1^2$ are not incomparable, it follows that $u_1^1 = u_1^2$ since
otherwise $L(r_1)$ and, hence $\laout{p}$, would not be prefix-free, contradicting the assumption that $\M{A}$ is good.

If $u_1^1 = u_1^2$, then also $u_2^1 = u_2^2$ if they are not incomparable. To see this, assume that e.g., $u_2^1$ is a prefix of $u_2^2$. Then either $u_3^1$ is a prefix of the remainder of $u_2^2$, or the other way around. In either case, $\laout{q}$ would not be prefix-free. The other case is symmetric.

Finally, $u_3^1$ and $u_3^2$ must be incomparable if $p_1 \not = p_2$, since otherwise $L(r_3) \cup L(r'_3) \subseteq \laout{q}$ would not be prefix-free or $L(r_3)$ and $L(r'_3)$ would not be disjoint, contradicting
the first or third item of goodness of $\M{A}$. If $p_1 = p_2$, then by the same argument, $u_3^1$ and $u_3^2$ are
either incomparable or equal, for otherwise $L(r_3)$ would not be prefix-free.

\item The other subcase is that there is $r'_2$ with $(p, r'_2, p'')$ such that there are $u_1 \in L(r'_1)$
and $u_2 \in L(r'_2)$. 
As before, we have that $u_1 = u_1^1 u_1^2 u_1^3$ 
such that $u_1^1 \in L(r_1)$, $u_2^1 \in L(r_2^*)$ and $u_3^1 \in L(r_3)$. Then, $u_1^1$ and $u_2$ are
 incomparable, for otherwise $L(r_1) \cup L(r'_2) \subseteq \laout{p}$ would not be prefix-free or $L(r_1)$ 
and $L(r'_2)$ would not be disjoint.
\end{itemize}
 
\end{proof}

\begin{lemma}
\label{lem:keep-simple}
Let $\M{A}$ be an NFA$_\text{reg}$ and let $\Transition{\M{A}}{}{\M{A}'}$. If it is good and simple w.r.t.\ some almost-periodic $w$, then $\M{A}'$ is so. 
\end{lemma}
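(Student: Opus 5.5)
By Lemma~\ref{lem:keep-goodness}, $\M{A}'$ is good, so it remains to show simplicity w.r.t.\ $w$. We split on the kind of step. For $\Transition{\M{A}}{\ajoin{p}{q}{r_1}{r_2}}{\M{A}'}$ the only new edge is labelled $r_1+r_2$. Then $L(r_1+r_2)\cap\factor{w} = (L(r_1)\cap\factor{w})\cup(L(r_2)\cap\factor{w})$, which is a union of two finite sets. All other edges are unchanged, so this case is immediate.

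The substantial case is $\Transition{\M{A}}{\aremove{q}}{\M{A}'}$. Edges in $\delta''$ are inherited, so we only need to treat a new edge $(p, r_1r_2^*r_3, p')$ built from $(p,r_1,q),(q,r_2,q),(q,r_3,p')\in\delta$. Since $\factor{w}$ is closed under taking factors, every $u\in L(r_1r_2^*r_3)\cap\factor{w}$ factors as $u = u_1 x u_3$ with each piece in $\factor{w}$. Here $u_1\in L(r_1)\cap\factor{w}$, $u_3\in L(r_3)\cap\factor{w}$, and $x\in L(r_2)^*$ with $x\in\factor{w}$. By simplicity of $\M{A}$ there are only finitely many choices for $u_1$ and $u_3$. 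The whole problem therefore reduces to showing that $L(r_2)^*\cap\factor{w}$ is finite.

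For that reduction, put $L = L(r_2)\cap\factor{w}$ and $L' = L(r_3)\cap\factor{w}$; both are finite by simplicity of $\M{A}$. A word $x\in L(r_2)^*\cap\factor{w}$ has all its blocks in $\factor{w}$, so $x\in L^*$. Moreover $xu_3\in\factor{w}$ for the $u_3$ in the decomposition, so $xu_3\in L^*L\cup L^*L'$ up to the trivial case $x=\epsilon$. We then need $L^*L'$ to be prefix-free in order to invoke Lemma~\ref{lem:fac:finite}. This comes from goodness of $\M{A}$ at $q$. The set $L\cup L'\subseteq\laout{q}$ is prefix-free, $L$ and $L'$ are disjoint by the third item, and $\epsilon\notin\laout{q}$. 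A standard block-by-block comparison then shows $L^*L'$ is prefix-free: given two comparable words, compare their first blocks, which must be equal, strip them, and induct. The last block from $L'$ cannot be a prefix of, or be prefixed by, an $L$-block or a different $L'$-block. This is the same argument as in the proof of Lemma~\ref{lem:keep-goodness}.

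Lemma~\ref{lem:fac:finite} then gives that $L^*L\cap\factor{w}$ is finite. A word of $L^*\cap\factor{w}$ is either $\epsilon$ or lies in $L^*L\cap\factor{w}$, so $L(r_2)^*\cap\factor{w}$ is finite. Hence $L(r_1r_2^*r_3)\cap\factor{w}$ is finite, since it is contained in the finite set of concatenations $u_1xu_3$.

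The main obstacle is applying Lemma~\ref{lem:fac:finite} correctly. We must choose the finite languages $L, L'$ as the restrictions to $\factor{w}$ rather than the full $L(r_2), L(r_3)$, which might be infinite, and verify that prefix-freeness of $L^*L'$ follows from goodness. Two edge cases also need checking. If $q$ has no loop, the edge is just $r_1r_3$ and is trivially fine. If $L'$ is empty, the new edge has no factor words at all.
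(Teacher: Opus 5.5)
Your handling of the join step is correct and matches the paper. So are the decomposition $u=u_1xu_3$ with all pieces in $\factor{w}$ and your derivation of prefix-freeness of $L^*L'$ from goodness at $q$. The gap is the final step, where you conclude that $L(r_2)^*\cap\factor{w}$ is finite. This is false in general, even when $L'\neq\emptyset$.

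For a counterexample, take $w=(ab)^\omega$ and the good, simple automaton with edges $(q_I,a,q)$, $(q,ab,q)$, $(q,b,q_F)$, and remove $q$. Then $L=\{ab\}$ and $L'=\{b\}$. The set $L\cup L'$ is prefix-free and disjoint, and $L^*L'=(ab)^*b$ is prefix-free. Yet $L(r_2)^*\cap\factor{w}=(ab)^*$ is infinite.

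The same data refute Lemma~\ref{lem:fac:finite} as literally printed, with conclusion $L^*L\cap\factor{w}$, which is the form you invoked; that conclusion is a typo. The lemma's proof assumes $L^*\{v\}\cap\factor{w}$ is infinite and produces a strict prefix inside $L^*\{v\}$. So what it actually shows is that $L^*L'\cap\factor{w}$ is finite, and that is also how the paper applies it.

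The repair is to keep the last block attached rather than strip it off. We have $xu_3\in L^*L'\cap\factor{w}$, which is finite by the corrected lemma. Hence $L(r_1r_2^*r_3)\cap\factor{w}\subseteq(L(r_1)\cap\factor{w})\cdot(L^*L'\cap\factor{w})$ is finite.

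In the example, $(ab)^*b\cap\factor{(ab)^\omega}=\{b\}$. Long iterations of $r_2$ are excluded only because they must be followed by an $r_3$-block inside a factor of $w$. That is exactly the information your reduction to $L(r_2)^*\cap\factor{w}$ throws away. With this change your argument becomes the paper's proof.
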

\begin{proof}
By Lemma~\ref{lem:keep-goodness}, $\M{A}'$ is good. Again, the case of 
$\Transition{\M{A}}{\ajoin{p}{q}{r_1}{r_2}}{\M{A}'}$ is straightforward, so assume
that  $\Transition{\M{A}}{\aremove{q}}{\M{A}'}$ for some $q$.

It suffices to show that $L(r_1 r_2^* r_3) \cap \factor{w}$ is finite for a new edge in $\delta' \setminus \delta$,
since this is already the case for the edges that are also present in $\delta$. By the induction
hypothesis, all of $L(r_1) \cap \factor{w}$, $L(r_2) \cap \factor{w}$ and $L(r_3) \cap \factor{w}$ are
finite.
Since $\M{A}$ is good, these are also all prefix-free. Hence
$L(r_2^* r_3)$ is also prefix-free and with it $(L(r_2) \cap \factor{w})^* (L(r_3) \cap \factor{w})$.
Write $L_2$ for $L(r_2) \cap \factor{w}$ and $L_3$ for $L(r_3) \cap \factor{w}$. 
By Lemma~\ref{lem:fac:finite}, $(L_2^* L_3 ) \cap \factor{w}  = (L(r_2^*) L(r_3)) \cap \factor{w}$ is also
finite. But then $L(r_1 r_2^* r_3) \cap \factor{w}$ is also finite, which is as desired.
\end{proof}

\subsection{The Result}

\begin{theorem}
\label{thm:finite-convergence}
Trivial Automata have bounded runs on almost-periodic words.
\end{theorem}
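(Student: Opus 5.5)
Fix an almost-periodic word $w$ and a trivial automaton $\M{A} = (Q, \Sigma, \delta, q_I, F, b)$. The bit $b$ only decides acceptance, so we may ignore it. The goal is to bound the length of finite runs of $\M{A}$ on suffixes $w^i$. A run from $q_I$ on $w^i$ stops after $j$ steps exactly when the prefix $w(i)\dotsb w(i+j-1)$ drives the deterministic transition structure from $q_I$ into $F$ without passing through $F$ earlier.

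\textbf{Step 1: a good automaton for the stopping words.} Let $L_{\mathrm{stop}}$ be the language of finite words that take $\M{A}$ from $q_I$ into $F$ for the first time. Since final states have no successors and the run halts there, $L_{\mathrm{stop}}$ is prefix-free. I will build an NFA$_\text{reg}$ for it as follows. Start from the deterministic transition graph on $Q \setminus F$, merge all final states into one fresh final state $q_F$ with no outgoing edges, and keep $q_I$ as the unique initial state. (If $q_I \in F$, every run stops after $0$ steps and we are done.) Then minimize in the classical sense, so that distinct states have distinct residual languages, and prune states from which $q_F$ is unreachable.

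The resulting automaton is good:
\begin{itemize}
\item every edge is labelled by a single letter, so $\epsilon \notin \laout{q}$;
\item determinism makes the outgoing letters pairwise distinct, which gives both disjointness and prefix-freeness of $\laout{q}$ (single letters are pairwise incomparable).
\end{itemize}
It is also trivially simple with respect to $w$, since each edge language is a single letter.

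\textbf{Step 2: eliminate states.} Apply Lemma~\ref{lem:conversion:folklore} to obtain a sequence of $\ajoin{\cdot}{\cdot}{\cdot}{\cdot}$ and $\aremove{\cdot}$ steps ending in an automaton with only the states $q_I$ and $q_F$. Remove steps require no parallel edges, so perform joins first as needed. By Lemmas~\ref{lem:keep-goodness} and~\ref{lem:keep-simple}, every automaton along the sequence remains good and simple with respect to $w$. In the final automaton the only edges are a loop $r_0$ on $q_I$ (if any) and an edge $r_1$ from $q_I$ to $q_F$, so $L_{\mathrm{stop}} = L(r_0^* r_1)$. By goodness, $L(r_0)$ and $L(r_1)$ are disjoint and their union is prefix-free, so $L(r_0^*r_1)$ is prefix-free. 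By simplicity, $L_0 = L(r_0)\cap\factor{w}$ and $L_1 = L(r_1)\cap\factor{w}$ are finite.

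Any stopping prefix actually occurring in $w$ is a factor of $w$. Each block of its $r_0^*r_1$-decomposition is then also a factor, so the prefix lies in $L_0^*L_1$, and $L_0^*L_1$ is prefix-free as a subset. This is the same maneuver as in Lemma~\ref{lem:keep-simple}, combined with Lemma~\ref{lem:fac:finite}: it yields that $L(r_0^*r_1)\cap\factor{w}$ is finite. Let $k$ be the maximal length of a word in this finite set. Every finite run of $\M{A}$ on any suffix of $w$ has length at most $k$, which is the boundedness claim. If one needs runs from arbitrary start positions or from all states, the same argument applies with each state as initial state.

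\textbf{Main obstacle.} The delicate point is the interface between the final automaton and Lemma~\ref{lem:fac:finite}. The lemma as stated speaks of $L^*L$; here we need $L^*L'$ with $L' = L_1$, which is what the proof of Lemma~\ref{lem:fac:finite} actually establishes. A second concern is that minimization and the merging of final states must preserve the hypotheses of goodness, in particular that $q_F$ has no outgoing edges and that $q_I$ can carry a loop. For the latter I would note that $q_I$ is never removed, so any loop on it survives as $r_0$. If minimality hypotheses cause trouble (for instance, unreachable or dead states), I would first restrict to the trim part of the automaton, which affects neither determinism nor $L_{\mathrm{stop}}$.
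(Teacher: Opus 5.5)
Your proposal is correct and follows essentially the same route as the paper's proof. Both turn the trivial automaton into a minimal, good, simple NFA$_\text{reg}$: a DFA with a single final state $q_F$ that has no outgoing edges. Both then eliminate states while Lemmas~\ref{lem:keep-goodness} and~\ref{lem:keep-simple} preserve goodness and simplicity, and conclude from the final form $r_0^* r_1$, via Lemma~\ref{lem:fac:finite} read as a statement about $L^*L'$, that only finitely many stopping words are factors of $w$. Your explicit merging of final states, pruning of dead states, and treatment of runs starting at every suffix only spell out details the paper leaves under ``w.l.o.g.''
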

\begin{proof}
Let $w$ be an almost-periodic word and
let $\M{A} = (Q, \Sigma, \delta, q_I, F, b)$ be a trivial automaton. Note that $\delta$ is deterministic on
$Q \setminus F$.  W.l.o.g.\ $L(\M{A}_q) \not = L(\M{A}_{q'})$
for $q \not = q$ (otherwise, run the usual DFA minimization algorithm). Note that $q_I$ cannot be
a final state unless $L(\M{A}) = \{\epsilon\}$, in which case the result is immediate. Hence, also $F = \{q_F\}$ whence, $\M{A}$ is minimal. 

Moreover, since $\M{A}$ is a DFA, we have that if $(q, r, p) \in \delta$, then $r = a$ for some $a \in \Sigma$,
and for each $q$ and $a$ there is at most one $p$ such that $(q, a, p) \in \delta$. Hence, $\M{A}$ is good, and
it is simple trivially since $\{a\} \cap \factor{w} = \{a\}$ or $\{a\} \cap \factor{w} = \emptyset$ for all
$a \in \Sigma$.

Hence, we can run the conversion algorithm mentioned in Lemma~\ref{lem:conversion:folklore} to convert $\M{A}$ into an equivalent $\M{A}'$ that has only initial and final states. Since $\M{A}$ has exactly one of each,
this means that $\M{A}' = (\{q_I, q_F\}, \Sigma, \delta, q_I, \{q_F\}, b)$. It is easy to see that $q_{F}$ still
has no outgoing edges, and that $\delta = \{(q_I, r, q_I), (q_I, s, q_F)\}$ for some $r, s$. Hence
$L(\M{A}') = L(r^* s)$. 

By Lemmas~\ref{lem:keep-minimal}, \ref{lem:keep-goodness} and \ref{lem:keep-simple}, $\M{A}'$ is also minimal, good and simple. 
Hence, both $L(r) \cap \factor{w}$ and $L(s) \cap \factor{w}$ are finite, and $L(r) \cap L(s) = \emptyset$, and, moreover, $\laout{q_I} = L(r) \cup L(s)$ is prefix-free. It is now not hard to see that $L(r^* s) \cap \factor{w}$
is also finite, using Lemma~\ref{lem:fac:finite} and the fact that $L(r^*s) \cap \factor{w} \subseteq (L(r)^* \cap \factor{w}) \cdot L(s) \cap \factor{W}$, the latter component of which was already seen to be finite.

Hence, $L(\M{A}) \cap \factor{w}$ is also finite, since it agrees with $L(r^* s) \cap \factor{w}$. Hence there are only finitely many words that are a factor of $w$ and for which there is a run on $\M{A}$ from
$q_I$ to some final state. Let $k$ be the length of the longest such word. Then all runs of $\M{A}$ on $w$ either stop after at most $k$ steps, or go on forever.
\end{proof}

This also yields a decidability result for $\mucalc$ on almost-periodic words.
\begin{proposition}
Let $w$ be an effectively almost-periodic word, and assume that, for each $v \in \factor{w}$, the recurrence distance $\recurrence{v}{w}$ is computable.
Then $\mucalc$ is decidable on $w$.
\end{proposition}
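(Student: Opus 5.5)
The idea is to reduce model checking of a closed $\mucalc$ formula $\varphi$ on $w$ to evaluating a formula of basic modal logic on a finite prefix of $w$. By Theorem~\ref{thm:finite-convergence}, $w$ has bounded runs for every trivial automaton. By Prop.~\ref{prop:mfcs}, $w$ therefore has finite convergence, so $\varphi$ is equivalent on $w$ to its $n$th unfolding $\varphi^n$ (Def.~\ref{def:unfolding}) for some $n$. Since $\varphi^n$ is fixpoint-free, whether $w \models \varphi^n$ holds depends only on the first $\mathrm{md}(\varphi^n)+1$ letters of $w$, where $\mathrm{md}$ is the modal depth. These letters are computable because $w$ is effectively constructible. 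The whole difficulty is that the existence of $n$ is not enough; we must be able to compute $n$ (or at least recognise it) from $\varphi$.

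To compute $n$, I would make the proof of Theorem~\ref{thm:finite-convergence} effective for a given trivial automaton $\M{A}$. The bound $k$ there is the length of the longest word in $L(\M{A}) \cap \factor{w}$. We know $L(\M{A}) \cap \factor{w} = L(r^*s) \cap \factor{w}$, and this set is finite, but we do not know its size in advance. I would therefore enumerate lengths $\ell = 0,1,2,\dotsc$. Deciding whether a given finite word $v$ lies in $\factor{w}$ must also be effective. Because $w$ is almost periodic, $v$ either occurs in every factor of length $\recurrence{v}{w}$, or not after position $\recurrence{v}{w}$. So $v \in \factor{w}$ iff $v$ occurs in the prefix of length $2\recurrence{v}{w}+|v|$, and that prefix is computable.

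The main obstacle is knowing when to stop the enumeration, since finiteness alone gives no stopping criterion. I would use a stronger property of bounded runs: for each state $q$ of $\M{A}$, either $q$ never reaches $q_F$ on factors of $w$ after some position, or it always reaches $q_F$ within a bounded number of steps. Then the procedure searches for some $K$ such that every factor $u$ of $w$ of length $K$, read from $q_I$, either has a proper prefix in $L(\M{A})$ or avoids $q_F$ entirely. The set of all length-$K$ factors is computable: take all length-$K$ words $u$ that occur in the prefix of length $2\recurrence{u}{w}+K$.

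I expect the genuine difficulty to be certifying that a run which avoids $q_F$ for $K$ steps avoids it forever. What I would try first is a pumping argument via the finiteness in Lemma~\ref{lem:fac:finite}: once $K$ exceeds a bound computed from the recurrence distances of the finitely many relevant factors, a further visit to $q_F$ would yield infinitely many factors in $L(r^*s)$. Failing that, I would simply dovetail the search over $K$, relying on the guaranteed existence of $k$. Once $k$ is available for the automata that Lemma~\ref{lem:conf-tra} and the construction of \cite{DBLP:conf/mfcs/BruseSL21} associate with $\varphi$, the required unfolding depth $n$ is computable. Deciding $w \models \varphi$ then reduces to the finite check of $w \models \varphi^n$ described above.
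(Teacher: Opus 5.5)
Your overall route is the one the paper sketches: attach trivial automata to $\varphi$, compute bounds on their runs, and then decide $w \models \varphi$ by inspecting a computable prefix. Your test for membership in $\factor{w}$ via $\recurrence{v}{w}$ is also fine. The gap is that the step carrying the whole content of the proposition, computing the run bound, is never carried out.

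The condition you search for says at most that no word of $L(\M{A}) \cap \factor{w}$ has length exactly $K$. It says nothing about runs still alive after $K$ steps, so it can hold for $K$ far below the true bound. Dovetailing does not help, because it needs a test that confirms a candidate. ``A run that has not stopped within $K$ steps never stops'' is a property of the whole infinite word, and no finite inspection of $w$ verifies it. The mere existence of a bound gives no such test. Separately, the per-state dichotomy you call a stronger property of bounded runs is false as stated. Take a state that stops on $a$ and loops forever on $b$, over a uniformly recurrent word containing both letters: it satisfies neither alternative. Bounded runs is a dichotomy per starting position, not per state.

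What is missing is an a priori \emph{computable} length bound, which is what the paper means by making the proofs effective. Write $m$ for the maximal length of a word in $L \cup \{v\}$. The chain argument of Lemma~\ref{lem:fac:finite} shows that no chain $v_0,\dotsc,v_m$ with $m+1$ levels can exist. Given a long element of $L^*\{v\} \cap \ifactor{w}$, you can take each $v_{i+1}v$ to be its shortest block-suffix with $|v_{i+1}| \geq \recurrence{v_iv}{w}$, so that $|v_{i+1}| < \recurrence{v_iv}{w} + m$. The function $n \mapsto \max\{\recurrence{z}{w} \mid z \in \factor{w}, |z| \leq n\}$ is computable. Iterating it $m$ times gives a computable bound on the lengths of words in $L^*\{v\} \cap \ifactor{w}$. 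Finitely-occurring words are handled by first passing to a block-suffix that starts past the computable position after which all factors of the relevant lengths are uniformly recurrent.

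With such bounds you can list $L(r) \cap \factor{w}$ explicitly for every edge, at every step of the state elimination (Lemma~\ref{lem:keep-simple}). At the end you read off the run bound as the maximal length in $L(r^*s) \cap \factor{w}$. Your ``pumping'' remark points in this direction, but it does not say which recurrence distances enter or why the chain levels have bounded length. It also overlooks that $L(r) \cap \factor{w}$ and $L(s) \cap \factor{w}$ are themselves only known to be finite, so the bound has to be propagated through every elimination step.
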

We omit a proof, since the result is already known \cite{semenov}. A proof consists of the observation that, from
\cite{DBLP:conf/mfcs/BruseSL21}, we can rewrite $\varphi$ into a trivial automaton $\M{A}_\varphi$. The bound on the runs 
of $\M{A}_\varphi$ can be computed by making the proofs above effective. This is rather tedious and yields a non-elementary
upper bound (similar to \cite{semenov}.

 We close this section with the following remark: Since the $\omega$-regular languages are exactly those defined by the linear-time $\mu$-calculus,
any $\omega$-regular language is equivalent to some trivial automaton. But we have just seen that such a trivial automaton can also be rewritten into
a finite-word regular expression, which captures exactly the prefix-free finite-word language of factors on which the trivial automaton stops. For the sake
of simplicity, we assume for the moment that this means that it accepts.
\begin{remark}
\label{rem:nf}
Any $\omega$-regular expression of the form $\bigcup_{i=0}^{n-1} r_i s_i^\omega$ can be rewritten into an expression of the form $r \Sigma^\omega$ such that the two agree on the set of almost-periodic words.
\end{remark}
Note that the latter is an $\omega$-regular expression that only has one component, but also a trivial infinitary part.

\section{Conclusion}
\label{sec:concl}
We have settled the open question from \cite{DBLP:conf/mfcs/BruseSL21} after the class
of words on which all fixpoints are equivalent to some finite approximation, by showing
that this is exactly the class of almost-periodic words. Besides re-obtaining a known decidability
result for $\mucalc$ on these words, this also yields a normal form for $\omega$-regular expressions
over almost-periodic words, and a simple way to compute them.

Directions for further research comprise extending the results to more complicated classes of structures.
The case of bi-infinite words is already under investigation\footnote{Personal communication from Damian Niwinski.},
and a natural follow-up question concerns infinite trees. Here, the class of trees under investigation
is important, since automata-theoretic methods on trees rely on the choice of tree (ranked or unranked, ordered
or unordered). While these classes can be translated into each other, some of these translations 
do not keep closure ordinals: on binary trees, due to continuity reasons, the closure ordinal of
a formula is always at most $\omega$, while it is straightforward to generate trees with large countable closure
ordinals if branching can be infinite.

 \bibliographystyle{plain}
 \bibliography{literature}

\end{document}